\newtheorem{lemma}{Lemma}
\newtheorem{theorem}{Theorem}
\newtheorem{definition}{Definition}
\newtheorem{observation}{Observation}
\newtheorem{claim}{Claim}
\newcommand{\commentout}[1]{}
\newcommand{\cls}[1]{\ensuremath{\mathsf{#1}}}
\def\NP{\cls{NP}}
\renewcommand{\P}{\cls{P}}
\def\ZPP{\cls{ZPP}}
\def\FPT{\cls{FPT}}
\def\APX{\cls{APX}}
\def\Wone{\ensuremath{\cls{W}[1]}}
\newcommand{\defer}[1]{}
\begin{document}

\title{On Approximating String Selection Problems with Outliers}
\author{Christina Boucher\thanks{Department of Computer Science and Engineering, University of California, San Diego}
\and  Gad M. Landau\thanks{Department of Computer Science, University of Haifa, Mount Carmel, Haifa 31905, Israel.}~\thanks{Polytechnic Institute of NYU,  NY 11201-3840, USA.}
\and Avivit Levy\thanks{Shenkar College for Engineering and Design, Ramat-Gan, 52526, Israel}~\thanks{CRI, University of Haifa, Mount Carmel, Haifa 31905, Israel}
\and David Pritchard\thanks{CEMC, University of Waterloo, Canada}
\and Oren Weimann\footnotemark[2]}
\maketitle

\maketitle


\begin{abstract}
Many problems in bioinformatics are about finding strings that approximately represent a collection of given strings. We look at more general problems where some input strings can be classified as outliers. The {\em Close to Most Strings} problem is, given a set $S$ of same-length strings, and a parameter $d$, find a string $x$ that maximizes the number of ``non-outliers" within Hamming distance $d$ of $x$. We prove this problem has no PTAS unless $\ZPP=\NP$, correcting a decade-old mistake. The {\em Most Strings with Few Bad Columns} problem is to find a maximum-size subset of input strings so that the number of non-identical positions is at most $k$; we show it has no PTAS unless $\P=\NP$. We also observe {\em Closest to $k$ Strings} has no EPTAS unless $\Wone=\FPT$. In sum, outliers help model problems associated with using biological data, but we show the problem of finding an approximate solution is computationally difficult.
\end{abstract}


\section{Introduction}

With the development of high-throughput next generation sequencing technologies, there has arisen large amounts of genomic data, and an increased need for novel ways to analyze this data.  This has inspired numerous formulations of biological tasks as computational problems.  In light of this observation, Lanctot et al.~\cite{lanctot} initiated the study of {\em distinguishing string selection problems}, where we seek a representative string satisfying some distance constraints from each of the input strings. We will mostly have constraints in the form of an upper bound on the Hamming distance, but lower bounds on the Hamming distance, and substring distances, have also been considered~\cite{DLLM,GGR03,lanctot}.


Typically, the distance constraint must be satisfied for each of the input strings.  However, biological sequence data is subject to frequent random mutations and errors, particularly in specific segments of the data; requiring that the solution fits the entire input data is problematic for many problems in bioinformatics. 
  It would be preferable to find the similarity of a portion of the input strings, excluding a few bad reads that have been corrupted, rather than trying to fit the complete set of input and in doing so finding one that is distant from many or all of the strings.

What if we are given a measure of goodness (e.g., distance) the representative must satisfy, and want to choose the largest \emph{subset} of strings with such a representative? Conversely, what if we specify the subset size and seek a representative that is as good as possible?  Some results are known in this area with respect to fixed-parameter tractability~\cite{boucher_ma}. Here, we prove results about the approximability of three {\em string selection problems with outliers}. 
For any two strings $x$ and $y$ of same length, we denote the Hamming distance between them as $d(x, y)$, which is defined as the number of mismatched positions. Our main results are about three \NP\ optimization problems. 

\begin{definition} {\em Close to Most Strings} (a.k.a.~{\em Max Close String} \cite{lanctot,ma00})\\
\noindent Input: $n$ strings $S = \{s_1, \ldots, s_n\}$ of length $\ell$ over an alphabet $\Sigma$, and $d \in \mathbf{Z}_+$.\\
\noindent Solution: a string $s$ of length $\ell$.\\
\noindent Objective: maximize the number of strings $s_i$ in $S$ that satisfy $d(s, s_i) \leq d$.
\end{definition}

\begin{definition} {\em Closest to $k$ Strings} \\
\noindent Input: $n$ strings $S = \{s_1, \ldots, s_n\}$ of length $\ell$  over an alphabet $\Sigma$, and $k \in \mathbf{Z}_+$.\\
\noindent Solution: a string $s$ of length $\ell$ and a subset $S^*$ of $S$ of size $k$.\\
\noindent Objective: minimize $\max \{d(s, s_i) \mid s_i \in S^*\}$.
\end{definition}
\noindent In the special case $k=n$, {\em Closest to $k$ Strings} becomes {\em Closest String} --- an \NP-hard problem~\cite{FL97} that has received significant interest in parameterized complexity and approximability \cite{ami2,AIP,GNR03,lanctot,LMW02,MS08,wang}.

We also consider a problem where the ``outliers" are considered to be positions (``columns") rather than strings (``rows"). Let $s(j)$ indicate the $j$th character of string $s$.

\begin{definition} {\em Most Strings with Few Bad Columns}\\
\noindent Input: $n$ strings $S = \{s_1, \ldots, s_n\}$ of length $\ell$  over an alphabet $\Sigma$, and $k \in \mathbf{Z}_+$.\\
\noindent Solution: a subset $S^* \subseteq S$ of strings such that the number $\{t \in [\ell] \mid \exists s_i^*, s_j^* \in S^*: s_i^*(t) \neq s_j^*(t)\}$ of bad columns is at most $k$.\\
\noindent Objective: maximize $|S^*|$.
\end{definition}

\noindent In other words, a column $t$ is \emph{bad} when its entries are not-all-equal, among strings in $S^*$. The Most Strings with Few Bad Columns Problem generalizes the problem of  finding tandem repeats in a string \cite{LSS}.

\subsection{Our contributions}
A  PTAS for a minimization problem is an algorithm that takes an instance of the problem and a parameter $\epsilon > 0$ and, in time that is polynomial for any fixed $\epsilon$, produces a solution that is within a factor $1 + \epsilon$ of being optimal. An \emph{efficient PTAS} (EPTAS) further restricts the running time to be some function of $\epsilon$ times a constant-degree polynomial in the input size. We present several results on the computational hardness of efficiently finding an approximate solution to the above optimization problems. Specifically, we show the following:
\begin{itemize}
\item The Close to Most Strings Problem has no PTAS, unless $\ZPP=\NP$ (Theorem~\ref{thm:randred}).
\item The Most Strings with Few Bad Columns Problem has no PTAS, unless $\P=\NP$ (Theorem~\ref{t:bcwo2}).
\item We observe that the known PTAS~\cite{ma00} for the Closest to $k$ Strings Problem cannot be improved to an EPTAS, unless $\Wone=\FPT$.
\end{itemize}

Our first result corrects an error in prior literature. A problem is \emph{\APX-hard} if for some fixed $\epsilon>0$, finding a $(1+\epsilon)$-approximation is \NP-hard. A 2000 paper of Ma \cite{ma00} claims that the Close to Most Strings problem is \APX-hard; however, the reduction is erroneous. To explain, it is helpful to define one more problem, {\em Far from Most Strings}, which is the same as Close to Most Strings except that we want to maximize the number of strings $s_i$ in $S$ that satisfy $d(s, s_i) \geq d$ (rather than $\le$). There is considerable experimental interest in heuristics for Far from Most Strings, mostly based on local search~\cite{MOP05,F07,FP11}. Far From Most Strings was introduced and studied by Lanctot et al.~\cite{lanctot}, and they (correctly) showed that for any fixed alphabet size greater than or equal to three, Far from Most Strings is at least as hard to approximate as Independent Set. Currently, Independent Set is known~\cite{KP06} to be inapproximable within a factor of $n/2^{\log^{3/4+\epsilon}n}$ unless $\NP \subset \cls{BPTIME}(2^{\log^{O(1)}n})$.

The main idea in Ma's approach was to consider a binary alphabet. In detail, the Far from Most Strings and Close to Most Strings Problem on alphabets $\Sigma = \{0, 1\}$ are basically the same problem, since a string $s$ of length $\ell$ has distance at most $d$ from $s_i$ if and only if the complementary string $\overline{s}$ has distance at least $\ell-d$ from $s_i$. 
The crucial error in \cite{ma00} is that Ma mis-cited~\cite{lanctot}, assuming that their result worked on binary alphabets. (One reason why the approach of \cite{lanctot} does not extend to binary alphabets in any obvious way is that the instances produced by their reduction satisfy $d = \ell$, whereas Far from Most Strings is easy to solve when $|\Sigma|=2$ and $d = \ell$.)

From \cite{lanctot} and \cite{ma00} we cannot conclude anything about the hardness of Close to Most Strings, nor can we say anything about the hardness of Far from Most Strings when $|\Sigma|=2$. Our results close both of these gaps: the proof of Theorem~\ref{thm:randred} actually shows Close to Most Strings is hard over a binary alphabet, from which it follows that Far from Most Strings is, too. At the same time, the hardness that we are able to achieve is much more modest than the previous claim; we show only that there is no 1.001-approximation. We also require a randomized reduction. It is a very interesting open problem to determine whether this problem has any constant-factor approximation, even over a binary alphabet.

\subsection{Brief Description of Parameterized Complexity}

Some parameterized complexity concepts will arise in later sections, so we give a birds-eye view of this area.  With respect to a parameter $k$, a decision algorithm with running time $f(k)n^{O(1)}$ (where $n$ is the input length) is called {\em fixed parameter tractable} (FPT); the class \FPT\ contains all parameterized problems with FPT algorithms. The corresponding reduction notion between two parameterized problems is an \emph{FPT reduction}, which is FPT, and also increases the parameter by some function that is independent of the instance size. The class \Wone\ is a superset of \FPT\ closed under FPT-reductions. A problem is \emph{\Wone-hard} if any \Wone\ problem can be FPT-reduced to it, and \emph{\Wone-complete} if it is both in \Wone\ and \Wone-hard. There are many natural \Wone-complete problems, like Maximum Clique parameterized by clique size. It is widely hypothesized that $\FPT \subsetneq \Wone$, but unproven, analogous to $\P \subsetneq \NP$.

\section{Approximation Hardness of Close to Most Strings}\label{section:closetomost}

\begin{theorem}\label{thm:randred}
For some $\epsilon>0$, if there is a polynomial-time $(1+\epsilon)$-approximation algorithm for the Close to Most Strings Problem, then $\ZPP = \NP$.
\end{theorem}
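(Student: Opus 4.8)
The plan is to establish a fixed gap for Close to Most Strings over a binary alphabet by means of a \emph{randomized} gap-producing reduction from a problem that is already \NP-hard to approximate within some fixed constant (for instance Independent Set on bounded-degree graphs, or a gap version of a constraint satisfaction problem supplied by the PCP theorem). Because over $\Sigma=\{0,1\}$ the Close to Most and Far from Most objectives are interchangeable via complementation (as noted above), I will aim the construction at Close to Most directly and let the Far from Most consequence follow for free. The target is to produce, from a source instance $I$, a Close to Most instance whose optimum is at least $c$ when $I$ is a ``yes'' instance and at most $s$ when $I$ is a ``no'' instance, with $c/s \ge 1+\epsilon$ for the desired $\epsilon$ (around $0.001$).

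For the construction itself, I would first realize the combinatorial structure of $I$ over a \emph{large} alphabet using a Lanctot-style gadget, so that candidate solutions of $I$ correspond to centers $s$ that are within the distance bound of many input strings, and then collapse the alphabet to binary by substituting each symbol with an independently chosen random binary codeword of some length $m$, rescaling $d$ accordingly. Crucially, this substitution pushes the distance threshold away from the degenerate regime $d=\ell$ that made the plain binary problem easy, and it is exactly here that randomness enters. On the completeness side this is routine: a good solution of $I$ designates a center that, block by block, matches the intended codewords and hence covers the intended $c$ strings.

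The soundness analysis is the crux and the main obstacle. I would isolate a \emph{poly-time verifiable} ``goodness'' event for the random code --- essentially that all relevant pairwise codeword distances are concentrated near $m/2$ --- which holds with high probability by a Chernoff bound together with a union bound over the polynomially many symbol pairs and coordinate blocks. The work is to show that goodness \emph{deterministically} forces every center, in the ``no'' case, to be within distance $d$ of at most $s$ strings. The naive difficulty is that there are $2^\ell$ possible centers, far too many for a direct union bound; I expect to circumvent this by a structural argument that restricts attention to centers of a canonical form (e.g.\ coordinatewise majorities of the covered strings, so that a covering center is essentially determined by a small subset of inputs), or by an $\epsilon$-net/covering argument at the level of blocks, so that concentration need only be controlled on a manageable family. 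Quantifying completeness versus soundness under goodness then yields the constant ratio $1+\epsilon$.

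Finally, to obtain the $\ZPP=\NP$ conclusion rather than $\P=\NP$, I would package the reduction as a Las Vegas procedure in the style of Feige--Kilian. Given a hypothetical polynomial-time $(1+\epsilon)$-approximation, one samples the random code, \emph{verifies} in polynomial time that it is good (repeating the rare failures), and then runs the approximation: since goodness deterministically guarantees the gap, the approximate value cleanly separates ``yes'' from ``no.'' This decides the \NP-hard source with zero error in expected polynomial time, placing $\NP$ inside $\ZPP$ and hence forcing $\ZPP=\NP$.
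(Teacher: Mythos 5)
Your proposal is a plan rather than a proof: the step you yourself identify as ``the crux and the main obstacle'' --- showing that in the no case \emph{every} one of the $2^{\ell}$ candidate centers covers at most $s$ strings --- is left unexecuted, with only two candidate strategies named (canonical/majority centers, or an $\epsilon$-net over blocks), neither of which is obviously sound here. In particular, the coordinatewise majority of a covered set need not itself cover that set within the same radius (this is essentially why Closest String is hard), so the ``canonical form'' reduction of the center space does not come for free. There is also a mismatch in your Las Vegas packaging: you propose to verify in polynomial time that all pairwise codeword distances concentrate near $m/2$, but that polynomially checkable event is not shown to \emph{deterministically} imply soundness against all exponentially many centers; bridging that gap is exactly the missing argument. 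Separately, grafting a random binary code onto the Lanctot gadget must contend with the fact that Lanctot's instances sit at the degenerate threshold $d=\ell$; you assert the substitution escapes this regime but do not calibrate the new threshold or verify completeness and soundness under it.

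For comparison, the paper sidesteps both difficulties with a different construction: it reduces from Max-2-SAT (H{\aa}stad's $22/21$ gap), encodes each variable by a $2$-bit block so that assignments correspond to centers in $\{00,11\}^n$, and adds $cm$ random ``fixing'' strings drawn from $\{01,10\}^n$. The union bound over all $2^{2n}$ centers then goes through \emph{directly}, because each non-assignment center is far from each fixing string with probability at least $1/4$, giving per-center failure probability $\exp(-\Omega(m))$ with $m\ge n$ and $c$ large; no net or canonicalization is needed. And the Las Vegas property is obtained not by certifying the random coins but by certifying the \emph{output}: whenever the approximation algorithm returns some $s_A\in\{00,11\}^n$, the corresponding assignment is unconditionally a $22/21$-approximation (the optimum is always at least $cm+k^*$ regardless of the coins), and the structural lemma only guarantees that this good form occurs with probability $1-0.9^n$ per trial, so one simply reruns with fresh randomness otherwise. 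To salvage your route you would need either per-center concentration strong enough to beat a union bound over all $2^{\ell}$ centers, or an output-certification trick of this kind.
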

\begin{proof}
We use a reduction from the {\em Max-2-SAT Problem}, which is to determine for a given 2-CNF formula, an assignment that satisfies the maximum number of clauses. Let $X = \{x_1, \ldots, x_n\}$ be a set of Boolean variables. In 2-CNF, each clause is a disjunction of two literals, each of which is either $x_i$ or $\overline{x_i}$ for some $i$. H{\aa}stad \cite{hastad} showed it is \NP-hard to compute a $22/21$-approximately optimal solution to Max-2-SAT, and this is the starting point for our proof. We will assume that $m \ge n$, i.e.~the number of clauses is greater than or equal to the number of variables, which is without loss of generality since otherwise some variable appears in at most one clause and the instance can be reduced.

\begin{figure}\centering
\includegraphics{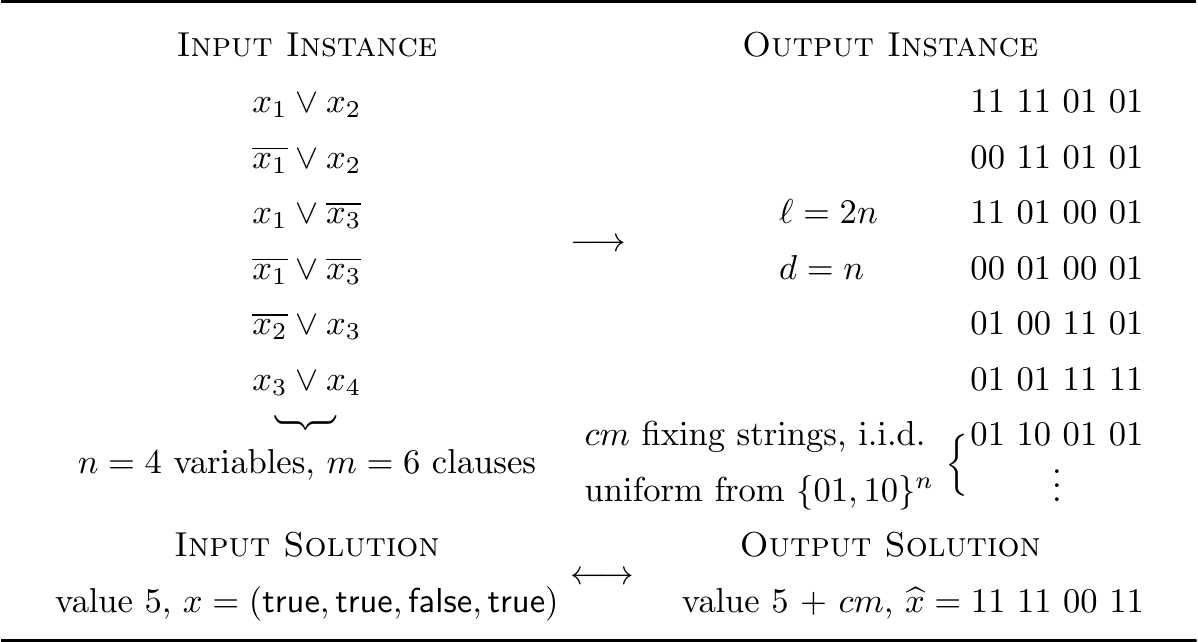}\caption{Overview of the reduction used to prove Theorem~\ref{thm:randred}.}\label{figure:strings}\end{figure}

We give a schematic overview of our reduction in Figure~\ref{figure:strings}. The reduction will be randomized. It takes as input an instance of Max-2-SAT with $m$ clauses and $n$ variables. The reduction's output is an instance of Close to Most Strings with $c m + m$ strings of length $2n$ for some constant $c$, and the distance parameter of the instance is $d=n$. Of these strings, $c m$ will be ``fixing" strings to enforce a certain structure in near-optimal solutions, and the remaining $m$ strings are defined from the clauses as follows. Given a 2-clause $\omega_j$ over the variables in $X$, we define the corresponding string $s_j = s_j(1) \ldots s_j(2n)$ as follows:
\begin{equation*}
s_j(2i - 1) s_j(2i) =
\begin{cases}
00 & \text{if $\omega_j$ contains the literal $\overline{x_i}$, } \\
11 & \text{if $\omega_j$ contains the literal $x_i$, } \\
01 & \text{otherwise. } \\
\end{cases}
\end{equation*}
\noindent The fixing strings will all be elements of $\{01, 10\}^n$, selected independently and uniformly at random.

We now give a high-level explanation of the proof. For every variable assignment vector $x$ define a string $\widehat{x}$ via
\begin{equation*}
\widehat{x}(2i - 1) \widehat{x}(2i) =
\begin{cases}
11 & \text{if $x_i$ is true, } \\
00 & \text{if $x_i$ is false. } \\
\end{cases}
\end{equation*}
Notice that $\widehat{x}$ is at distance exactly $d=n$ from all of the fixing strings, and that $d(\widehat{x}, s_j) \le n$ if and only if $x$ satisfies clause $\omega_j$. Hence, if $x$ satisfies $k$ clauses, the string $\widehat{x}$ is within distance $d$ of $c m + k$ out of the $c m + m$ total strings. We will show conversely that with high probability, for all strings $s$ within distance $d$ of $c m$ of the strings, we have $s \in \{00, 11\}^n$. Using this crucial structural claim, it follows that any sufficiently good approximation algorithm for Close to Most Strings must output $s$ such that $s = \widehat{x}$ for some $x$. Then the claim will be complete via standard calculations.

Here is the precise statement of the structural property.
\begin{lemma}\label{lemma:fixing}
For $c \ge 20$, the following holds. Let $F$ be a set of $c m$ strings selected uniformly and independently at random from $\{01, 10\}^n$ (with replacement), with $m \ge n$. Then with probability at least $1-0.9^{n}$, every string $s \in \{0, 1\}^{2n} \setminus \{00, 11\}^n$ has distance greater than $n$ from at least $m$ strings in $F$.
\end{lemma}
\begin{proof}
To explain the proof more simply, fix $s$ and consider a particular $f \in F$. By hypothesis, this $s$ satisfies $s(2i-1) \neq s(2i)$, say $s(2i-1)=0$ and $s(2i)=1$ (the other case is symmetric). Since $f$ is chosen uniformly at random from $\{01, 10\}^{n}$, the event $\mathcal{E}$ where $f(2i-1)=1$ and $f(2i)=0$ has $\Pr[\mathcal{E}]=1/2$. A short calculation which we postpone momentarily shows that $\Pr[d(s, f) \ge n+1 \mid \mathcal{E}] \ge 1/2$. So unconditioning, $\Pr[d(s, f) \ge n+1] = \Pr[d(s, f) \ge n+1 \mid \mathcal{E}] \cdot \Pr[\mathcal{E}] \ge 1/4$.

Let us verify now that $\Pr[d(s, f) \ge n+1 \mid \mathcal{E}] \ge 1/2$. Observe that $d(s, f)$ is a sum of $n$ independent random variables $d(s(2j-1)s(2j), f(2j-1)f(2j))$ for $j$ from 1 to $n$; conditioning on $\mathcal{E}$ just fixes one of these variables at 2. The remaining ones are either always 1 (if $s(2j-1)=s(2j)$), or a uniformly random element of $\{0, 2\}$. The conditioned random variable $d(s, f) \mid \mathcal{E}$ is thus a shifted and scaled binomial distribution, in particular it is symmetric about $n+1$. So
$\Pr[d(s, f) \ge n+1 \mid \mathcal{E}] = \Pr[d(s, f) \le n+1 \mid \mathcal{E}]$ and since these two probabilities' sum is at least 1,
$\Pr[d(s, f) \ge n+1 \mid \mathcal{E}] \ge 1/2$ follows.

We will continue reasoning about this fixed $s$, and use a Chernoff bound to get large enough probability to work for all possible $s$. Let $F = \{f_1, \dotsc, f_{c m}\}$ and let $X_i$ be an indicator variable for the event that $d(f_i, s) > n$. We have argued that each $X_i$ is 1 with probability at least 1/4. Therefore, $E[\sum_i X_i] \ge cm/4$. We will use a Chernoff bound of the following form:
\begin{claim}[Lower Chernoff bound, \cite{randalgs}]
If $X$ is a sum of independent 0-1 random variables, then we have $$\Pr[X < (1-\delta)E[X]] < \exp(-E[X] \delta^2 / 2).$$
\end{claim}
Choose $\delta$ so that $(1-\delta)cm/4 = m$, i.e.~$\delta=1-4/c$. Then $\Pr[X < m] \le \Pr[X < (1-\delta)E[X]]  < \exp(-cm/4 \cdot (1-4/c)^2/2) = \exp(\frac{-(c-4)^2}{8c}m)$. By a union bound over all $4^n-2^n$ possible choices of $s$, the probability that a random choice of $F$ admits \emph{any} bad $s$ is at most $$(4^n-2^n) \exp\Bigl(\frac{-(c-4)^2}{8c}m\Bigr) < 4^n \exp\Bigl(\frac{-(c-4)^2}{8c}n\Bigr) = \exp\Bigl(\bigl(\ln 4 - \frac{(c-4)^2}{8c}\bigr)n\Bigr).$$
Any large enough $c$ makes this exponentially decreasing in $n$; it is straightforward to calculate that when $c=20$
this is at most $0.9^n$, as needed.
\end{proof}

Now let us complete the overall proof; fix $c=20$. Given a Max-2-SAT instance, we run the randomized reduction above to get an instance of Close to Most Strings. Let $s_A$ be a $(1+\epsilon)$-approximation for this instance, where $\epsilon$ will be a small constant fixed later to satisfy two properties.

Let $k^*$ be the maximum number of satisfiable clauses in the Max-2-SAT instance. As an important technicality, note that $k^*$ is lower-bounded by $m/2$, since the expected number of clauses satisfied by a random assignment is at least $m/2$, by linearity of expectation. So the optimal solution to the Close to Most Strings instance has value at least $cm+m/2$.

First we want to use the structural lemma (Lemma \ref{lemma:fixing}). Assume for now the bad event with probability $0.9^n$ does not happen; so every $s \not\in \{00, 11\}^n$ (i.e.~not of the form $s = \widehat{x}$) is within distance $d$ of at most $cm$ of the $(c+1)m$ strings. Thus provided that $\epsilon$ is small enough to satisfy $1+\epsilon < \frac{cm+m/2}{cm} = 1 + \frac{1}{2c}$, then $s_A$ is of the form $\widehat{x}_A$ for some $x_A$.

Next we finish the typical calculations in a proof of \APX-hardness. We know that $s_A$ is within distance $d$ of at least $(cm+k^*)/(1+\epsilon)$ strings. If we can pick $\epsilon$ so that
\begin{equation}\label{eq:hoopla}\frac{cm+k^*}{1+\epsilon} > cm + \frac{21}{22}k^*\end{equation}
then $\widehat{x}_A$ satisfies more than $\frac{21}{22}k^*$ clauses, which is NP-hard by H{\aa}stad's result. Using that $k^* \ge m/2$, it is easy to verify that \eqref{eq:hoopla} holds for all $\epsilon < 1/(21+44c)$.

Finally, we confirm that the randomized algorithm for Max-2-SAT coming from the reduction is ZPP-style, i.e.~Las Vegas style. As long as the output $s_A$ of the Close to Most Strings approximation algorithm satisfies $s_A \not\in \{00, 11\}^n$ we re-create the reduction again using fresh random bits and re-run the approximation algorithm. But once $s_A \in \{00, 11\}^n$ we know for certain that $\widehat{x}_A$ is a 22/21-approximate solution for Max-2-SAT, as needed. The expected number of trials is at most $1/(1-0.9^n) = O(1)$.
\end{proof}

\section{Non-existence of an EPTAS for Closest to $k$ Strings}\label{section:ctoks}

Ma showed in \cite{ma00} that the Closest to $k$ Strings problem has a PTAS, which contrasts with the \APX-hardness we obtain for the other problems in this paper. A natural question that comes up after a PTAS is obtained, is whether the running time can be improved to an EPTAS, or even further to a FPTAS (running time polynomial in the input length and $\epsilon^{-1}$). \defer{~\cite{Arora98,HuntMRRRS98,marx05}.}
We observe there does not exist an EPTAS for Closest to $k$ Strings when the alphabet is unbounded, unless $\Wone=\FPT$. To see this, we use a well-known fact relating fixed-parameter algorithms to the notion of an EPTAS, e.g.~see~\cite{marx_survey}, along with the fact that the decision version of Closest to $k$ Strings is \Wone-hard when parameterized by $d$~\cite{boucher_ma}.

In detail, suppose for the sake of contradiction that we had an EPTAS for Closest to $k$ Strings, i.e.~that one could obtain a $(1+\epsilon)$-approximation in time $f(\epsilon)s^{O(1)}$ where $s$ is the input size. It is enough to prove that there is an FPT algorithm for the decision version of Closest to $k$ Strings, with parameter $d$. Given an instance of this parameterized problem we need only call the EPTAS with any $\epsilon$ less than $(d+1)/d$; notice the resulting algorithm takes FPT time with respect to $d$. To analyze this, let $d_{ALG}$ be the distance value of the solution produced by the EPTAS algorithm, and $d_{OPT}$ be the optimal distance value. If $d_{OPT} \le d$, since $d_{OPT} \le d_{ALG} \le (1+\epsilon)d_{OPT}$ and $d_{OPT}, d_{ALG} \in \mathbf{Z}$, we have $d_{OPT} = d_{ALG} \le d$. Otherwise, $d_{ALG} \ge d_{OPT} > d$. So, we get an FPT algorithm just by comparing $d_{ALG}$ to $d$.


\begin{observation} Closest to $k$ Strings has no EPTAS unless $\Wone = \FPT$. \end{observation}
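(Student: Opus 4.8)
The plan is to argue by contraposition, via the standard connection between efficient approximation schemes and fixed-parameter tractability (cf.~\cite{marx_survey}): I will show that an EPTAS for Closest to $k$ Strings would yield an FPT algorithm, parameterized by the distance bound $d$, for the \emph{decision} version of the same problem. Since that decision problem is \Wone-hard when parameterized by $d$ \cite{boucher_ma}, such an algorithm would force $\FPT = \Wone$, which is the desired conclusion.

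First I would fix the decision problem: given $S$, $k$, and a target $d$, decide whether there exist a string $s$ and a size-$k$ subset $S^* \subseteq S$ with $\max_{s_i \in S^*} d(s, s_i) \le d$. The governing idea is that the optimal objective value $d_{OPT}$ is a nonnegative integer, so I can run the hypothesized EPTAS with an accuracy parameter $\epsilon$ chosen fine enough that its multiplicative slack cannot bridge the integer gap between $d$ and $d+1$. Concretely I would take any $\epsilon < 1/d$ (equivalently $1+\epsilon < (d+1)/d$) and read off the distance value $d_{ALG}$ of the returned solution.

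The correctness step is then a one-line integrality argument. If $d_{OPT} \le d$, then $d_{OPT} \le d_{ALG} \le (1+\epsilon)d_{OPT} < d_{OPT} + 1$, and integrality of $d_{ALG}$ forces $d_{ALG} = d_{OPT} \le d$; if instead $d_{OPT} > d$, then $d_{ALG} \ge d_{OPT} > d$. Hence simply comparing $d_{ALG}$ with $d$ decides the instance exactly, with no error.

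The point I regard as the crux — and the reason the statement concerns an EPTAS rather than a mere PTAS — is the running-time bookkeeping. By definition an EPTAS runs in time $f(\epsilon)\, s^{O(1)}$ for some function $f$ and input size $s$, so substituting a value of $\epsilon$ depending only on $d$ (say $\epsilon = 1/(d+1)$) gives running time $f(1/(d+1))\, s^{O(1)}$, which is FPT in $d$. A generic PTAS would only promise something like $s^{O(1/\epsilon)} = s^{O(d)}$, which is not FPT, so the argument genuinely relies on the EPTAS-specific form of the time bound; this is where I expect any subtlety to lie. Assembling these pieces produces an FPT algorithm for a \Wone-hard problem, hence $\FPT = \Wone$, which completes the contrapositive and establishes the Observation.
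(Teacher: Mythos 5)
Your proposal is correct and follows essentially the same route as the paper: run the hypothesized EPTAS with $\epsilon < 1/d$, use integrality of the objective to decide the \Wone-hard decision version of Closest to $k$ Strings parameterized by $d$ \cite{boucher_ma}, and note that the $f(\epsilon)\,s^{O(1)}$ running time is what makes this FPT. If anything, your phrasing $1+\epsilon < (d+1)/d$ is slightly more careful than the paper's ``any $\epsilon$ less than $(d+1)/d$,'' and your remark on why a mere PTAS would only give $s^{O(d)}$ time is a worthwhile addition, but the argument is the same.
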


\def\vz{\mathbf{0}}

\section{\APX-Hardness of Most Strings with Few Bad Columns}

In this section, we prove that the Most Strings with Few Bad Columns Problem is APX-hard, even in binary alphabets. To do this we reduce from the {\em Densest-$k$-Subgraph Problem}: given a graph $G = (V, E)$ and a parameter $k$, find a subset $U \subseteq V$ with  $|U|=k$ such that $|E[U]|$ is maximized --- here $E[U]$ denotes the \emph{induced edges} for $U$, meaning the set of all edges with both endpoints in $U$.

Our reduction will be approximation-preserving up to an additive $+1$ term. Given an instance $(G=(V, E), k)$ of Densest-$k$-Subgraph, we will generate an instance of Most Strings with Few Bad Columns with $|E|+1$ strings, each of length $|V|$, and with the same values for the two parameters $k$ (size of subgraph, maximum number of bad columns). 


\begin{figure}
\centering
  \includegraphics{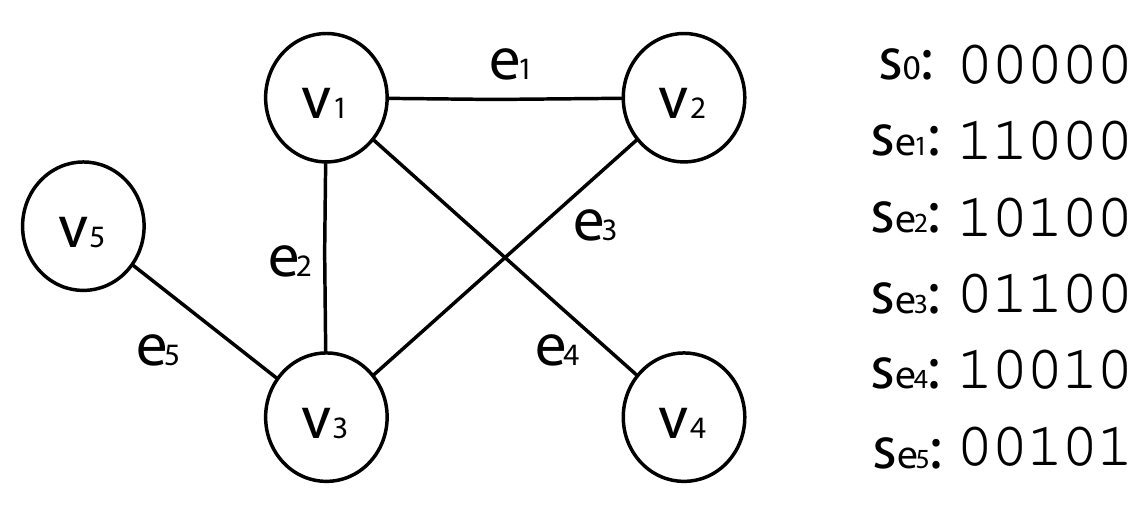}
\caption{Example of the reduction from an instance of Densest-$k$-Subgraph with $G$ and $k = 3$ to an instance of Most Strings with Few Bad Columns with 6 strings of length 5.}
\label{fig:example}
\end{figure}

Let us define the set $S$ of strings generated by the reduction; to do this, index $V = \{v_1, v_2, \dotsc\}.$ For each edge $e = v_iv_j \in E$, let that edge's \emph{0-1 incidence vector} $\chi(e)$ be the 0-1 string with 1s in positions $i$ and $j$ and 0 elsewhere; we put $\chi(e)$ into $S$. Finally, we put one more string into $S$, namely the all-zero string $\vz$. This completes the description of the reduction; note it only takes polynomial time. See Figure~\ref{fig:example} for an illustration of this reduction.

\begin{claim}\label{claim:optval}
Let $\alpha$ be the optimal value for the Densest-$k$-Subgraph instance. Then the optimal value $\beta$ for the new Most Strings with Few Bad Columns instance is $\beta = \alpha+1$.
\end{claim}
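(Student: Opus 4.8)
The plan is to prove the two inequalities $\beta \ge \alpha+1$ and $\beta \le \alpha+1$ separately, thinking of bad columns as a vertex set and edge-strings as induced edges. For the easy direction $\beta \ge \alpha+1$, I would take an optimal vertex set $U$ with $|U|=k$ and $|E[U]|=\alpha$ and exhibit the feasible solution $S^* = \{\vz\} \cup \{\chi(e) : e \in E[U]\}$. Because $\vz$ is all-zero, a column $t$ is bad in this $S^*$ exactly when some $\chi(e)$ carries a $1$ there, and every such $1$ sits in a position indexed by a vertex of $U$; since $|U|=k$ there are at most $k$ bad columns, so $S^*$ is feasible and has size $\alpha+1$.

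For the reverse inequality I would start from an arbitrary feasible $S^*$ with bad-column set $B$, $|B|\le k$, and show $|S^*|\le\alpha+1$ by converting bad columns back into vertices. Write $W=\{v_t : t\in B\}$, so $|W|\le k$, let $G_1$ denote the columns on which every string of $S^*$ is $1$, and let $p$ be the number of edge-strings in $S^*$. The key observation is that if a column carrying a $1$ of some edge-string is not bad, then all strings must agree on a $1$ there, i.e.\ that column lies in $G_1$; since an edge-string has only two $1$s, once $p\ge2$ at most one such common all-$1$ column can exist. Thus, apart from at most one exceptional ``center'' column, both endpoints of every selected edge lie in $W$, so the selected edges embed into $E[W]$, and after padding $W$ up to size exactly $k$ (which only increases the induced-edge count) this yields $p\le\alpha$ in the generic case and hence $|S^*|\le\alpha+1$.

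The case analysis splits on whether $\vz\in S^*$. If $\vz\in S^*$ then $G_1=\emptyset$, so every $1$ of every edge-string forces its column to be bad; both endpoints of each chosen edge then lie in $W$, giving $p\le\alpha$ and $|S^*|=p+1\le\alpha+1$. If $\vz\notin S^*$, the only delicate situation is when some column $t_0\in G_1$: there every chosen edge is incident to $v_{t_0}$, so the edges form a star whose $p$ leaf-columns are pairwise distinct and each bad, forcing $p\le k$; I then recover the bound by noting that a $k$-vertex subset of the star's $p+1$ vertices retains at least $p-1$ edges, so $\alpha\ge p-1$ and $|S^*|=p\le\alpha+1$.

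I expect this last ``star'' subcase to be the main obstacle, since it is precisely where the additive $+1$ is tight: a star on $p+1$ vertices realizes $p$ edge-strings while spending only $p\le k$ bad columns, whereas any $k$-vertex induced subgraph can retain at most $p-1$ of those edges. Checking that no other configuration exceeds $\alpha+1$, and that padding $W$ to size exactly $k$ never decreases $|E[W]|$, are the routine verifications that close the argument.
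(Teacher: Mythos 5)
Your proof is correct, and both inequalities are established soundly; in particular you correctly identify that the only delicate configuration is a star (an all-ones column forcing every selected edge through one vertex), which is exactly the configuration the paper's proof must also confront. The difference lies in how that configuration is discharged. The paper isolates it in an exchange lemma (Lemma~\ref{l:zeros}): if adding $\vz$ to a feasible $T$ would create a new bad column, then $T$ is a star $\{\chi(vw_1),\dotsc,\chi(vw_p)\}$, and swapping one edge $\chi(vw)$ for $\vz$ trades bad column $w$ for bad column $v$, preserving feasibility and cardinality; after that, the reverse direction is a one-line read-off since with $\vz$ present every non-bad column is all-zero. You instead keep the star case and beat it by direct counting: the $p$ leaf columns are distinct and bad (using simplicity of $G$), so $p\le k$, and a $k$-vertex induced subgraph of the star retains $p-1$ edges, giving $\alpha\ge p-1$ and $|S^*|=p\le\alpha+1$. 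Both arguments use the same two structural facts (a non-bad column met by a $1$ must be all-ones, and an edge-string has only two $1$s); the paper's exchange buys a cleaner modular statement reusable in the main text, while your count makes visible \emph{why} the additive $+1$ is tight --- the star realizes $p$ strings with only $p$ bad columns while any $k$-vertex subgraph keeps only $p-1$ of its edges. One small caution: your summary sentence ``the selected edges embed into $E[W]$'' is not literally true when the exceptional all-ones center column exists (the center need not be bad), but your subsequent case analysis handles that correctly, so the argument as a whole stands.
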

\begin{proof}



First we show the easy direction, that $\beta \ge \alpha+1$. Consider the optimal $U$ for Densest-$k$-Subgraph, so that $|E[U]|=\alpha$ and $|U|=k$. Define a subset $\overline{T}$ of $S$ by $\overline{T} = \{\vz\} \cup \{\chi(e) \mid e \in F\}$. Then the strings in $\overline{T}$ are all zero on any index corresponding to a node outside of $V$; the only bad columns are those corresponding to nodes in $V,$ of which there are only $k$. So $\beta \ge |\overline{T}|=\alpha+1$.

For the reverse direction, take a subset $T$ of $\beta$ strings that have at most $k$ bad columns. We can assume without loss of generality that the string $\vz$ is in $T$, as the following structural lemma shows.

\begin{lemma}\label{l:zeros}
Let $T \subseteq S$ be a subset of strings with at most $k$ bad columns. Then there is a subset $T'$ of $S$ with at most $k$ bad columns, $|T'| \ge |T|$, and $\vz\in T'$.
\end{lemma}

Assume for the moment that the lemma is true. Then we simply reverse the above reduction to show $\alpha \ge \beta-1$. Take an optimal set $S^*$ of strings with $|S^*|=\beta$ and such that $S^*$ has at most $k$ bad columns. By Lemma~\ref{l:zeros} we may assume $\vz \in S^*$ --- this implies that the set $J$ of all non-bad columns for $S^*$ satisfies $s(j) = 0$ for all $s \in S^*, j \in J$. Thus, each $\chi(uv) \in S^* \setminus \{\vz\}$ has both of its 1s appearing at positions in $[\ell] \setminus J$, or equivalently each such $uv$ is an element of $E[V \setminus J]$. So $V \setminus J$ is the required solution for Densest-$k$-Subgraph, and it has at least $\beta-1$ induced edges.

\begin{proof}[Proof of Lemma~\ref{l:zeros}]
Assume that $\vz \not \in T$, otherwise the lemma trivially follows. Also, assume $W = T\cup \{\vz\}$ has more than $k$ bad columns, otherwise we can take $T' = W$. Thus there must be a column that is not bad for $T$ but that becomes bad when adding $\vz$. I.e.~$T$ has a column that is entirely 1s. It follows that, viewed in the original graph setting, there exists a vertex $v$ that is an end-point of all the edges corresponding to $T$. Pick any such edge arbitrarily, i.e.~suppose $s = \chi(vw) \in T$. Since the input graph is simple, in column $w$, all entries of $T$ are 0 except for $\chi(vw)$. Hence, $T' = T \setminus s \cup \vz$ satisfies the lemma: compared with $T$ it is bad in column $v$ but not bad in column $w$.
\end{proof}

This ends the proof of Claim~\ref{claim:optval}.
\end{proof}


This reduction yields our result:

\begin{theorem}\label{t:bcwo2}
The Most Strings with Few Bad Columns Problem is NP-hard, and APX-hard.
\end{theorem}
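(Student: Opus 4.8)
The plan is to establish Theorem~\ref{t:bcwo2} by leveraging Claim~\ref{claim:optval}, which already gives a polynomial-time reduction from Densest-$k$-Subgraph to Most Strings with Few Bad Columns that is exact up to an additive $+1$ term (namely $\beta = \alpha+1$). First I would settle NP-hardness: since Densest-$k$-Subgraph is NP-hard (it contains Max-Clique as the special case of asking whether $\alpha = \binom{k}{2}$), and the reduction lets us recover $\alpha$ exactly from $\beta$ via $\alpha = \beta - 1$, NP-hardness of Most Strings with Few Bad Columns follows immediately. This handles the decision-problem part of the statement.

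For APX-hardness, the approach is to show that a hypothetical PTAS (equivalently, a $(1+\epsilon)$-approximation for every fixed $\epsilon$, or even just for one small fixed $\epsilon$) for Most Strings with Few Bad Columns would yield an approximation for Densest-$k$-Subgraph strong enough to contradict a known inapproximability result. I would invoke the fact that Densest-$k$-Subgraph is APX-hard, so there is a fixed $\gamma>1$ such that no polynomial-time $\gamma$-approximation exists unless $\P=\NP$. The key observation is that the additive $+1$ in $\beta=\alpha+1$ is harmless for multiplicative approximation \emph{as long as $\alpha$ is not too small}: given a set $S^*$ achieving value $|S^*|$ within a factor $1+\epsilon$ of $\beta$, the induced subgraph $V\setminus J$ recovered by the reverse direction of Claim~\ref{claim:optval} has at least $|S^*|-1$ edges, and comparing $|S^*|-1$ against $\alpha = \beta-1$ transfers the approximation ratio from one problem to the other up to lower-order terms governed by the $+1$.

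The main obstacle will be controlling that additive $+1$ so that the reduction genuinely preserves a \emph{fixed} multiplicative inapproximability gap. If the optimal Densest-$k$-Subgraph value $\alpha$ could be as small as a constant, then the $+1$ additive slack would swamp any fixed multiplicative ratio and the transfer would break. To handle this I would restrict attention to instances of Densest-$k$-Subgraph in which $\alpha=\Omega(k)$ (or more strongly $\alpha \ge \delta k$ for some absolute constant $\delta>0$), which is known to be the regime in which APX-hardness holds --- for example instances derived from bounded-degree Max-Clique or from the standard APX-hard families, where the optimum is linear in the relevant size parameter. On such instances the additive $+1$ is a negligible fraction of $\alpha$, so a $(1+\epsilon)$-approximation for Most Strings with Few Bad Columns yields a $(1+\epsilon')$-approximation for Densest-$k$-Subgraph with $\epsilon'\to 0$ as $\epsilon\to 0$; choosing $\epsilon$ small enough that $1+\epsilon' < \gamma$ contradicts the APX-hardness of Densest-$k$-Subgraph.

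Putting these pieces together, the proof is short: cite NP-hardness and APX-hardness of Densest-$k$-Subgraph (on the appropriate dense-optimum instances), apply the approximation-preserving reduction of Claim~\ref{claim:optval}, and verify that for a suitably small fixed $\epsilon$ the additive $+1$ does not erase the constant gap, so that a $(1+\epsilon)$-approximation would imply $\P=\NP$. I expect the only technical care needed is in stating precisely which family of Densest-$k$-Subgraph instances carries the APX-hardness with $\alpha$ bounded below linearly, and in checking the elementary inequality that converts the multiplicative guarantee on $\beta$ into one on $\alpha$.
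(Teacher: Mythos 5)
Your overall strategy is the same as the paper's: combine the reduction of Claim~\ref{claim:optval} with the APX-hardness of Densest-$k$-Subgraph due to Khot, and argue that the additive $+1$ does not destroy a constant multiplicative gap; the NP-hardness part is fine. However, the ``main obstacle'' you identify --- that the $+1$ could swamp the ratio when $\alpha$ is small --- is not actually an obstacle, and your proposed remedy (restricting to Densest-$k$-Subgraph instances with $\alpha=\Omega(k)$) introduces a claim you have not justified and would need to: it is not immediate that the known inapproximability of Densest-$k$-Subgraph survives such a restriction, and you cite no source for it. The point you are missing is that the two additive shifts cancel almost exactly. A $(1+\delta)$-approximation for Most Strings with Few Bad Columns returns a set of size at least $(\alpha+1)/(1+\delta)$, which after reversing the reduction yields a $k$-vertex set inducing at least
\[
\frac{\alpha+1}{1+\delta}-1=\frac{\alpha-\delta}{1+\delta}\ge\frac{\alpha(1-\delta)}{1+\delta}=\frac{\alpha}{1+O(\delta)}
\]
edges, where the inequality uses only $\alpha\ge 1$ (instances with $\alpha=0$ are trivial). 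This is exactly the computation in the paper, and it makes your restriction to dense-optimum instances unnecessary. As written, your proof has a gap at the unproven assertion that the hardness persists on the family with $\alpha\ge\delta k$; replacing that detour with the one-line inequality above closes it.
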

\begin{proof}
Khot~\cite{khot} showed that the Densest-$k$-Subgraph Problem is APX-hard. We need only to argue that our reduction can transform a PTAS for Most Strings with Few Bad Columns into a PTAS for Densest-$k$-Subgraph. Indeed, if we had a $(1+\delta)$-approximation algorithm for Most Strings with Few Bad Columns, then we get an algorithm for Densest-$k$-Subgraph that always returns a solution of value at least
$$(OPT+1)/(1+\delta)-1 = (OPT-\delta)/(1+\delta) \ge OPT(1-\delta)/(1+\delta) = OPT/(1+O(\delta))$$
where we used $OPT \ge 1$ in the middle inequality.
\end{proof}

While we ruled out a PTAS, it would also be out of the reach of current technology to obtain a constant or polylogarithmic factor for Most Strings with Few Bad Columns, because the best known approximation factor for the Densest-$k$-Subgraph Problem is $O(|V|^{1/4 + \epsilon})$~\cite{BCCFV}.


\section{Conclusions and Open Problems}

Our results demonstrate that while outliers help model the problems associated with using biological data, such problems are computationally intractable to approximate. 
Here are the main open problems related to our results:
\begin{itemize}
\item Is there a constant-factor approximation for either Close to Most Strings or  Most Strings with Few Bad Columns (even over a binary alphabet)?
\item Does there exist an EPTAS for the Closest String Problem?  Since the Closest String Problem is FPT with respect to $d$~\cite{GNR03}, the standard technique used in Section~\ref{section:ctoks} cannot be used naively.
\item Does there exist an EPTAS for the Closest to $k$ Strings Problem over a bounded-size or binary alphabet? The reduction used in Section~\ref{section:ctoks} needs an arbitrarily large alphabet.
\end{itemize}




\subsection*{Acknowledgments}

The authors would like to thank Dr. Bin Ma for mentioning the error in his inapproximability proof and encouraging us to work on a correction, and Dr. Daniel Lokshtanov and Christine Lo for their insights and comments. 

\bibliographystyle{plain}

\begin{thebibliography}{11}


\bibitem{ami2}
A.~Amir, H.~Paryenty, and L.~Roditty.
Approximations and partial solutions for the consensus sequence problem.
\newblock In {\em Proc.~of the 18th {SPIRE}}, pages 168--173, 2011.

\defer{\bibitem{ami}
A.~Amir, E.~Eisenberg, and A.~Levy.
\newblock Approximate periodicity.
\newblock In {\em Proc.~of the 21st {ISAAC}}, pages 25--36, 2010.

\bibitem{levy}
A.~Amir, E.~Eisenberg, A.~Levy, E.~Porat, and N.~Shapira.
\newblock Cycle detection and correction.
\newblock In {\em Proc.~of 37th {ICALP}}, pages 43--54, 2010.}

\bibitem{AIP}
A.~Andoni, P.~Indyk, and M.~Patrascu.
\newblock On the optimality of the dimensionality reduction method.
\newblock In {\em Proc.~of the 47th {FOCS}}, pages 449--456, 2006.






\bibitem{Arora98}
S.~Arora,
\newblock Polynomial Time Approximation Schemes for Euclidean Travelling Salesman and other Geometric Problems.
\newblock {\em J. ACM}, 45, 5:753--782, 1998.

\bibitem{BCCFV}
A.~Bhaskara, M.~Charikar, E.~Chlamtac, U.~Feige, A.~Vijayaraghavan.
\newblock Detecting high log-densities: an $O(n^{1/4})$ approximation for densest $k$-subgraph.
\newblock In {\em Proc.~of the 42nd {STOC}}, pages 201--210, 2010.


\bibitem{boucher_ma}
C.~Boucher and B.~Ma.
\newblock {\sc Closest String with Outliers}.
\newblock {\em {BMC} Bioinformatics}, 12(Suppl 1):S55, 2011.


\bibitem{DLLM}
X.~Deng, G.~Li, Z.~Li, B.~Ma, and L.~Wang.
\newblock Genetic design of drugs without side-effects.
\newblock {\em {SIAM} Journal on Computing}, 32(4):1073--1090, 2003.


\bibitem{F07}
P.~Festa.
\newblock On some optimization problems in molecular biology.
\newblock {\em Mathematical Biosciences}, 207(2):219--234, 2007.

\bibitem{FP11}
P.~Festa and P.~Pardalos.
\newblock Efficient solutions for the far from most string problem.
\newblock {\em Annals of Operations Research}, published Online First, December 2011.



\bibitem{FL97}
M.~Frances and A.~Litman.
\newblock On covering problems of codes.
\newblock {\em Theoretical Computer Science}, 30(2):113--119, 1997.

\defer{\bibitem{GJL}
L.~G\c{a}sieniec, J.~Jansson, and A.~Lingas.
\newblock Efficient approximation algorithms for the {H}amming center problem.
\newblock In {\em Proc.~of the 10th {SODA}}, pages 905--906, 1999.}

\bibitem{GNR03}
J.~Gramm, R.~Niedermeier, and P.~Rossmanith.
\newblock Fixed-parameter algorithms for CLOSEST STRING and related problems.
\newblock {\em Algorithmica}, 37(1):25--42, 2003.

\bibitem{GGR03}
J.~Gramm, J.~Guo, and R.~Niedermeier.
\newblock On Exact and Approximation Algorithms for Distinguishing Substring Selection.
\newblock In {\em Proc.~FST}, 195--209, 2003.

\bibitem{hastad}
J~H\r{a}stad.
\newblock Some optimal inapproximability results.
\newblock {\em Journal of the {ACM}}, 48(4):798--859, 2001.

\defer{\bibitem{HuntMRRRS98}
H.B.~Hunt III, M.V.~Marathe, V.~Radhakrishnan, S.S.~Ravi, D.J.~Rosenkrantz, and R.E.~Stearns,
\newblock NC-Approximation Schemes for NP- and PSPACE-Hard Problems for Geometric Graphs.
\newblock {\em J. Algorithms}, 26(2):238--274, 1998}

\bibitem{khot}
S.~Khot.
\newblock Ruling out {PTAS} for graph min-bisection, densest subgraph and bipartite clique.
\newblock {\em {SIAM} Journal on Computing}, 36(4):1025--1071, 2006.

\bibitem{KP06}
S.~Khot and A.~K.~Ponnuswami.
\newblock Better Inapproximability Results for {MaxClique}, {Chromatic Number}
	and {Min-3Lin-Deletion}.
\newblock In {\em Proc.~of the 33rd {ICALP}}, pages 226--237, 2006.

\bibitem{lanctot}
J.K. Lanctot, M.~Li, B.~Ma, S.~Wang, and L.~Zhang.
\newblock Distinguishing string selection problems.
\newblock {\em Information and Computation}, pages 41--55, 2003. Preliminary version appeared in Proc.~10th SODA, pages 41-55, 1999.

\bibitem{LSS}
G.M. Landau, J.P. Schmidt, and D.~Sokol.
\newblock An algorithm for approximate tandem repeats.
\newblock {\em Journal of Computational Biology}, 8(1): 1--18, 2001.

\bibitem{LMW02}
M.~Li, B.~Ma, and L.~Wang.
\newblock Finding similar regions in many strings.
\newblock {\em Journal of Computer and System Sciences}, 65(1):73--96, 2002.


\defer{\bibitem{marx05}
D.~Marx.
\newblock Efficient Approximation Schemes for Geometric Problems?
\newblock {\em Proc.~of 13th ESA}, 51(1): 448--459, 2005.}

\bibitem{marx_survey}
D.~Marx.
\newblock Parameterized complexity and approximation algorithms.
\newblock {\em Comput.~J.}, 51(1): 60--78, 2008.

\bibitem{MOP05}
C.~N.~Meneses and C.~A.~S.~Oliveira and P.~M.~Pardalos.
\newblock Optimization techniques for string selection and comparison problems in genomics.
\newblock {\em IEEE Engineering in Medicine and Biology Magazine}, 24(3):81--87, 2005.


\bibitem{ma00}
B.~Ma.
\newblock A polynomial time approximation scheme for the closest substring
  problem.
\newblock In {\em Proc.~of the 11th {CPM}}, pages 99--107, 2000.

\bibitem{randalgs}
R.~Motwani and P.~Raghavan.
\newblock \emph{Randomized Algorithms}, Cambridge University Press, 2000.

\bibitem{MS08}
B.~Ma and X.~Sun.
\newblock More efficient algorithms for closest string and substring problems.
\newblock {\em {SIAM} Journal on Computing}, 39:1432--1443, 2009.







\bibitem{wang}
L.~Wang and B.~Zhu.
\newblock Efficient algorithms for the closest string and distinguishing string
 selection problems.
\newblock In {\em Proc.~of the 3rd {FAW}}, pages 261--–270, 2009.

\end{thebibliography}
\vspace{-5mm}

\vspace{-5mm}

\end{document}